\documentclass[a4paper, 12pt, reqno]{amsart}

\numberwithin{equation}{section}

%%%%%%%packages
\usepackage{amssymb}
\usepackage{mathrsfs}
\usepackage{dsfont}
\usepackage{stmaryrd, MnSymbol}
\usepackage{enumerate, xspace}
\usepackage{changebar}
\usepackage{hyperref}
\usepackage{pdfsync}

%%%%%%%%%Pagination
\hfuzz=15pt

\textwidth 135mm
%  \textheight 218mm
%  \topmargin 0mm
%  \oddsidemargin 30pt
%  \evensidemargin 30pt
%% definition of theorem-type environments
\newtheorem{theorem}{Theorem}
\newtheorem{lemma}{Lemma}[section]

\newtheorem{proposition}{Proposition}[section]

\newtheorem{remark}{Remark}[section]
\newtheorem{example}{Example}[section]

\numberwithin{equation}{section}

\renewcommand{\a}{\alpha}
\renewcommand{\b}{\beta}
\newcommand{\e}{\varepsilon}

\def\R{{\mathbb{R}}}
\def\N{{\mathbb{N}}}
\def\Z{{\mathbb{Z}}}

\begin{document}
\title[Energy Diffusion]
{thermal conductivity for stochastic energy exchange models}
\author{Makiko Sasada}
\address{Makiko Sasada\\
Graduate School of Mathematical Sciences, The University of Tokyo\\
3-8-1, Komaba, Meguro-ku, Tokyo, 153-8914, Japan}
\email{{\tt sasada@ms.u-tokyo.ac.jp}}
%\date{\today}

\begin{abstract}
We consider a class of stochastic models for energy transport and study relations between the thermal conductivity and some static observables, such as the static conductivity, which is defined as the contribution of static correlations in Green-Kubo formula. The class of models is a generalization of two specific models derived by Gaspard and Gilbert as mesoscopic dynamics of energies for two-dimensional and three-dimensional locally confined hard-discs. They claim some equalities hold between the thermal conductivity and several static observables and also conjecture that these equations are universal in the sense that they hold for mesoscopic dynamics of energies for confined particles interacting through hard-core collisions. In this paper, we give sufficient and necessary conditions for these equalities to hold in the class we introduce. In particular, we prove that the equality between the thermal conductivity and other static observables holds if and only if the model obeys the gradient condition. Since the gradient condition does not hold for models derived by Gaspard and Gilbert, our result implies a part of their claim is incorrect. 

%More precisely, we show some of the equalities hold for any element of the class. Also, one of them holds if and only if the model satisfies an additional condition, which is satisfied by the models derived by Gaspard and Gilbert. 
\end{abstract}

\keywords{energy exchange model; thermal conductivity.}

\thanks{ This paper has been partially supported by by the Grant-in-Aid for Young Scientists (B) 25800068.}

\maketitle

\section{Introduction}\label{aba:sec1}
The derivation of the heat equation or the Fourier law for the macroscopic evolution of the energy through a space-time scaling limit from a microscopic dynamics given by Hamilton dynamics, is one of the most important problem in non-equilibrium statistical mechanics. Recently, Gaspard and Gilbert approached this problem by considering the motion of locally confined hard discs as microscopic dynamics. In a suitable limiting regime, they derive a stochastic model for energy transport as a mesoscopic dynamics and also study the thermal conductivity for this stochastic system. 

Let us recall the stochastic system obtained by Gaspard and Gilbert in \cite{GG08,GG09} (we call them GG-models). Consider a one-dimensional lattice and site energies $\e_1,\dots,\e_N$, $\e_i >0, \ 1 \le i \le N$. They evolve by a stochastic exchange of energy among pairs of neighboring sites. Let $P_N(\e_1,\dots,\e_N,t)$ denote the time-dependent energy distribution associated with the system. In the GG models, it is governed by the following master equation:
\begin{align}
& \partial_t  P_N(\e_1,\dots,\e_N,t)= \nonumber \\
   \sum_{i=1 }^{N-1} & \int_{-\e_{i+1}}^{\e_i} d\eta[W(\e_i-\eta,\e_{i+1}+\eta| \e_i,\e_{i+1})P_N(\dots,\e_i-\eta,\e_{i+1}+\eta, \dots,t)  \nonumber \\
 & - W(\e_i,\e_{i+1}| \e_i-\eta,\e_{i+1}+\eta)P_N(\dots,\e_i,\e_{i+1}, \dots,t) ],\label{eq:maindynamics}
\end{align}
where $W(\e_a,\e_b| \e_a-\eta,\e_b+\eta)$ describes the rate of exchange of energy $\eta$ between sites $a$ and $b$ at respective energies $\e_a$ and $\e_b$. In other words, in this dynamics, the amount of energy $\eta$ is moved between the neighboring sites $a$ and $b$ with rate $W(\e_a,\e_b| \e_a-\eta,\e_b+\eta)$. 

The microscopic mechanical dynamics consist of a one-dimensional array of two-dimensional cells, each containing a single hard-disc particle or an array of three-dimensional cells, each containing a single hard-sphere particle. The particles are confined to their cell, but also they can exchange energy with their neighbors. In these systems, the local dynamics in each cell and the interacting dynamics between neighbors are distinguished and energy transfer occurs only by the binary collisions. With a suitable limiting technique, the time-scale separation between the local dynamics and energy transport is achieved, which means each particle typically performs a large number of wall collisions with the wall of its confining cell, before it collides with one of its neighboring particles. In this regime, for the two-dimensional dynamics, Gaspard and Gilbert obtain the rate kernel 
\begin{align}\label{eq:2dimkernel}
W_{GG,2}(\e_a,\e_b| & \e_a-\eta,\e_b+\eta)  \nonumber \\
& =  \sqrt{\frac{2}{\pi^3}} \times 
\begin{cases}
\sqrt{\frac{1}{\e_a}}K(\sqrt{\frac{\e_b+\eta}{\e_a}}) \quad (-\e_b<\eta< \e_a-\e_b)  \\
\sqrt{\frac{1}{\e_b + \eta}}K(\sqrt{\frac{\e_a}{\e_b+\eta}}) \quad (\e_a-\e_b  <\eta< 0) \\
\sqrt{\frac{1}{\e_b}}K(\sqrt{\frac{\e_a -\eta}{\e_b}}) \quad (0 <\eta<\e_a)  \\
\end{cases}
\end{align}
valid for $\e_a \le \e_b$ where $K(m)$ is the Jacobi elliptic function of the first kind, i.e. $K(m)=\int_0^{\frac{\pi}{2}} \frac{1}{\sqrt{1-m^2 \sin ^2\theta}} d\theta$. In case $\e_a \ge \e_b$, by symmetry $W_{GG,2}$ is given as $W_{GG,2}(\e_a,\e_b| \e_a-\eta,\e_b+\eta)=W_{GG,2}(\e_b,\e_a| \e_b+\eta,\e_a-\eta)$.

Using the same strategy, Gaspard and Gilbert also derive the rate kernel for the three-dimensional dynamics as
\begin{align}\label{eq:3dimkernel}
W_{GG,3}(\e_a,\e_b| & \e_a-\eta,\e_b+\eta) \nonumber \\
& = \sqrt{\frac{\pi}{8}} \times
\begin{cases}
\sqrt{\frac{\e_b+\eta}{\e_a\e_b}} \quad (-\e_b<\eta<\min\{0,\e_a-\e_b\}) \\
\sqrt{\frac{1}{\max\{\e_a,\e_b\}}} \quad (\min\{0,\e_a-\e_b\} <\eta<\max\{0,\e_a-\e_b\}) \\
\sqrt{\frac{\e_a-\eta}{\e_a\e_b}} \quad (\max\{0,\e_a-\e_b\} <\eta<\e_a). \\
\end{cases}
\end{align}

The dynamics defined through (\ref{eq:maindynamics}) conserves the total energy, $\sum_{j=1}^N \e_j$. One also checks by inspection that stationary measures have the densities 
\begin{equation}\label{eq:equilibriummeasurea}
\prod_{i=1}^N \nu^T_{eq}(\e_i), \quad \text{for all} \quad T>0
\end{equation}
with 
\begin{equation}\label{eq:equilibriummeasure}
\nu^T_{eq}(\e)=\frac{\e^{\frac{d}{2}-1}\exp(-\frac{\e}{T})}{T^{\frac{d}{2}}\Gamma(\frac{d}{2})}.
\end{equation}
Here $d$ is the dimension of a single cell of the microscopic system, namely $d=2$ for the rate kernel $W_{GG,2}$ and $d=3$ for $W_{GG,3}$. The densities (\ref{eq:equilibriummeasurea}),(\ref{eq:equilibriummeasure}) are inherited from the underlying mechanical system: their invariant measure is Lebesgue for the positions and independent Maxwellian at temperature $T$ for the velocities. 

Note that we set unity such that the Boltzmann constant $k_B \equiv 1$ and the mass of each particle is also one. We denote the equilibrium average at temperature $T$ by $\langle \ \cdot \ \rangle_T$:
\begin{align}
 \langle f (\e_1, \dots, \e_N) \rangle_T  =\int_0^{\infty} \dots \int_0^{\infty}  f (\e_1, \dots, \e_N)   \prod_{i=1}^N \nu_{eq}(\e_i) d \e_i.
\end{align}

The thermal conductivity $\kappa(T)$ associated with energy transport is characterized by Fourier's law
\begin{equation}
q= -\kappa(T)\partial_xT,
\end{equation}
where $q$ is the heat flux and $\partial_x T$ is the temperature gradient. More precisely, under the diffusive space-time scaling limit, the time evolution of the local temperature
\begin{equation}
T \big(\frac{a}{N},t\big)=\frac{2}{d}\int_0^{\infty} \dots \int_0^{\infty}  \e_aP(\e_1,\dots,\e_N,tN^2)  \ \prod_{i=1}^N d\e_i 
\end{equation}
will be given by 
\begin{equation}\label{eq:Fourier}
\partial_t T=\partial_x (D(T)\partial_x T), \quad  T=T(x,t)
\end{equation}
with thermal diffusivity $D(T)=\frac{T^2}{\chi_T}\kappa(T)$. $\chi_T$ is the static energy variance $\chi_T=\langle \e_0^2 \rangle_T - \langle \e_0 \rangle_T^2$. $\chi_T=\frac{1}{2}dT^2$ by the ideal gas law $\langle \e \rangle_T=\frac{1}{2}dT$. One of the main interest is the relation between $\kappa(T)$ and the rate kernels $W_{GG,2}$ and $W_{GG,3}$.
%Even for a stochastic dynamics, it is difficult to derive the nonlinear diffusion equation (\ref{eq:Fourier}) as the diffusive scaling limit rigorously. Part of the problem is to correctly identify the thermal conductivity $\kappa(T)$. %Moreover, even it is not so clear how $\kappa(T)$ is obtained from the stochastic dynamics, or more precisely from the kernel $W$.  

To reveal the relation, Gaspard and Gilbert consider following static observables: $\langle \nu(\e_a,\e_b) \rangle_T$, $\langle (\e_a-\e_b)j(\e_a,\e_b) \rangle_T$ and $\langle h(\e_a,\e_b) \rangle_T$  where $\nu$, $j$ $h$ are the collision frequency, the first and second moments of the energy exchanges respectively defined as
\begin{align}
\nu(\e_a,\e_b) & :=\int_{-\e_b}^{\e_a} W_{GG,d}(\e_a,\e_b|\e_a-\eta,\e_b+\eta) d\eta, \\
j(\e_a,\e_b) & :=\int_{-\e_b}^{\e_a}\eta W_{GG,d}(\e_a,\e_b|\e_a-\eta,\e_b+\eta) d\eta, \\
h(\e_a,\e_b) & :=\int_{-\e_b}^{\e_a}\eta^2 W_{GG,d}(\e_a,\e_b|\e_a-\eta,\e_b+\eta) d\eta
\end{align}
for $d=2,3$. By symmetry, $\nu(\e_a,\e_b)=\nu(\e_b,\e_a)$, $j(\e_a,\e_b)=-j(\e_b,\e_a)$ and $h(\e_a,\e_b)=h(\e_b,\e_a)$. Note that $j(\e_a,\e_b)$ is the current of the energy from
the site $a$ to $b$. They also define $\kappa_s(T)$ as the conductivity obtained by the contributions of static correlations in Green-Kubo formula and derive the following equations in \cite{GG08, GG09} (cf. the equations (8)-(10) in \cite{GG09}):
\begin{equation}\label{eq:staticrelations}
\kappa(T)=\kappa_s(T)=\frac{1}{2} \langle (\e_a-\e_b)j(\e_a,\e_b)\rangle_T = \frac{1}{2} \langle h(\e_a,\e_b)\rangle_T =\langle \nu(\e_a,\e_b)\rangle_T.  
\end{equation}
Moreover, they conjecture that (\ref{eq:staticrelations}) holds for more general stochastic energy exchange models obtained from locally confined collision dynamics (cf. Section 8 in \cite{GG09}). However, by the simple scaling argument, we have the following scaling relations
\begin{align}\label{eq:scales1}
& \kappa(T)  = \kappa \sqrt{T}, \quad \kappa_s(T)=\kappa_s\sqrt{T}, \quad \frac{1}{2}<(\e_a-\e_b) j(\e_a,\e_b) >_T  =\kappa_{1}T^{5/2},  \nonumber \\
&  \frac{1}{2}<h(\e_a,\e_b)>_T = \kappa_{2}T^{5/2} \quad <\nu(\e_a,\e_b)>_T = \kappa_{f} \sqrt{T} 
\end{align}
for constants $\kappa,\kappa_s,\kappa_1,\kappa_2$ and $\kappa_f$ which are independent from $T$. Namely, the precise conjecture should be 
\begin{equation}\label{eq:realconjecture}
\kappa=\kappa_s=\kappa_1=\kappa_2=\kappa_f
\end{equation}
holds where
\begin{align}
& \kappa  =\kappa(1), \quad \kappa_s=\kappa_s(1), \quad \kappa_1=\frac{1}{2}<(\e_a-\e_b) j(\e_a,\e_b) >_1,  \nonumber \\
&  \kappa_2=\frac{1}{2}<h(\e_a,\e_b)>_1, \quad \kappa_f=<\nu(\e_a,\e_b)>_1.
\end{align}

The aim of the paper is to find the sufficient and necessary condition for (\ref{eq:realconjecture}) to hold. We first introduce a generalized class of rate kernels, which we call mesoscopic mechanical kernels including $W_{GG,2}$ and $W_{GG,3}$. Then, we show $\kappa_s=\kappa_1=\kappa_2$ holds universally in this class. We also show that $\kappa_1=\kappa_f$ holds under the additional condition (\ref{eq:3=4}). It is easy to check (\ref{eq:3=4}) holds for $W_{GG,2}$ and $W_{GG,3}$. Finally, we show $\kappa=\kappa_s$ holds if the current $j(\e_a,\e_b)$ is written as a gradient form and otherwise $\kappa < \kappa_s$. Since the current for the rate kernels $W_{GG,2}$ and $W_{GG,3}$ do not have a gradient form, we conclude $\kappa < \kappa_s=\kappa_1=\kappa_2=\kappa_f$ for GG models. We remark that the simulation result in \cite{GG08,GG09} may imply the difference between $\kappa$ and $\kappa_s$ is very small compared to $\kappa$ (or $\kappa_s$).

\section{Mesoscopic mechanical kernels}

We recall the model introduced in the last section. We consider a system of $N$ energy cells $\e_1,\dots,\e_N$ along a one-dimensional axis with stochastic exchange of energy among pairs of neighboring cells. The dynamics is given by the master equation for the time-dependent energy distribution $P_N(\e_1,\dots,\e_N,t)$ in terms of the kernel $W(\e_a,\e_b| \e_a-\eta,\e_b+\eta)$.

We call $W$ is a mesoscopic mechanical kernel if $W$ satisfies the following three conditions:
\begin{enumerate}
\item[(i)] \  $W(c  \e_a, c  \e_b|  c(\e_a- \eta), c (\e_b+\eta)) = \sqrt{\frac{1}{c}} W(\e_a,\e_b|\e_a-\eta,\e_b+\eta), \quad \forall c>0$, 
 \item[(ii)] \ $W(\e_a,\e_b|\e_a-\eta,\e_b+\eta) =W(\e_b,\e_a|\e_b+\eta,\e_a-\eta)$, 
  \item[(iii)] \ $(\e_a\e_b)^{\frac{d}{2}-1}  W(\e_a,\e_b|\e_a-\eta,\e_b+\eta) \\
 = ((\e_a-\eta)(\e_b+\eta))^{\frac{d}{2}-1} W(\e_a-\eta,\e_b+\eta|\e_a,\e_b)$ for some $d \in \N$.
\end{enumerate}
If the rate kernel $W$ is obtained as a mesoscopic energy transport model of mechanical models in which locally confined particles interact through hard-core collisions, the conditions (i) to (iii)  are automatically satisfied. 
Property (i) comes from the observation that the energy exchange happens with rate proportional to
the absolute value of the velocity of microscopic hard particles, which is the square root of the energy. %Namely, the frequency of the collision $\nu(\e_a,\e_b)$ defined below is proportional to the square root of energy. 
The property (ii)
claims a dynamical symmetry between the sites $a$ and $b$. The property (iii) is the detailed balance condition which follows from the reversibility of the underling mechanical dynamics. Under the condition (iii), the equilibrium measure for the stochastic energy exchange model at temperature $T$ has density $\prod_{i=1}^N \nu_{eq}^T(\e_i)$ with $\nu_{eq}^T$ defined by (\ref{eq:equilibriummeasure}). 

We call the stochastic energy exchange model given by the master equation (\ref{eq:maindynamics}) is a mesoscopic mechanical model if its kernel is a mesoscopic mechanical kernel. In the rest of the paper, we only consider the mescoscopic mechanical models.

\begin{remark}
In \cite{GKS}, more general class of energy exchange models are introduced. Also, they introduce its subclass, called mechanical models which are closely relates to the mesocopic mechanical models which we introduced here.
\end{remark}

\section{Green-Kubo formula}

In this section, we given explicit expressions of $\kappa(T)$ and $\kappa_s(T)$ by Green-Kubo formula. The content of this section is quite classical and general (see \cite{Sp}).  

From (\ref{eq:Fourier}), $\kappa(T)$ equals the spreading of energy from an initial perturbation, modulo the proper normalization. This leads %first recall the Green-Kubo formula :
\begin{align}
%& \kappa(T) =\frac{1}{T^2}\int_0^{\infty} \sum_{i \in \Z}  \langle j (\e_0 (0),\e_1 (0) ) j (\e_i (t),\e_{i+1} (t)) \rangle_{\mathbb{P}_T} dt 
\kappa(T) & =  - \frac{1}{4T^2} \lim_{t \to \infty} \frac{1}{t} \sum_{i \in \Z} i^2 \langle (\e_i(t)-\e_i(0))(\e_0(t)- \e_0(0)) \rangle_{\mathbb{P}_T}, 
 %& =- \frac{1}{4T^2} \lim_{t \to \infty} \frac{1}{t} \sum_{i \in \Z} i^2 \langle (\e_i(t)-\e_i(0))(\e_0(t)- \e_0(0)) \rangle_{\mathbb{P}_T} \\
\end{align}
since by the symmetry of the dynamics and the form of the equilibrium densities
\begin{align}
& \langle (\e_i(t)-\e_i(0))(\e_0(t)- \e_0(0)) \rangle_{\mathbb{P}_T} = \nonumber \\
& = \langle \e_i(t)\e_0(0) \rangle_{\mathbb{P}_T} +  \langle \e_i(0)\e_0(t) \rangle_{\mathbb{P}_T}  - \langle \e_i(t) \rangle_{\mathbb{P}_T} \langle \e_0(t) \rangle_{\mathbb{P}_T} - \langle \e_i(0) \rangle_{\mathbb{P}_T} \langle \e_0(0) \rangle_{\mathbb{P}_T}\nonumber \\
& = 2 \big( \langle \e_i(t)\e_0(0) \rangle_{\mathbb{P}_T} - \langle \e_i(0) \rangle_{\mathbb{P}_T} \langle \e_0(0) \rangle_{\mathbb{P}_T} \big) =2 \langle (\e_i(t)-\e_i(0))\e_0(0) \rangle_{\mathbb{P}_T} 
\end{align}
where $\mathbb{P}_T$ is the distribution of the dynamics starting from the equilibrium state at temperature $T$. 

%\begin{align*}
%\int_0^{\infty} & \langle  \sum_{i \in \Z} j (\e_0 (0),\e_1 (0) ) j (\e_i (t),\e_{i+1} (t)) \rangle_{\mathbb{P}_T} dt \\
%& = \int_0^{\infty} T^3 \langle  \sum_{i \in \Z} j (\e_0 (0),\e_1 (0) ) j (\e_i (\sqrt{T}t),\e_{i+1} (\sqrt{T}t)) \rangle_{\mathbb{P}_1} dt \\
%& = T^{5/2} \int_0^{\infty}  \langle  \sum_{i \in \Z} j (\e_0 (0),\e_1 (0) ) j (\e_i (t),\e_{i+1} (t)) \rangle_{\mathbb{P}_1} dt.
%\end{align*}
%Here, for the first equality, we use the fact $j(T\e_a,T\e_b)=T^{3/2}j(\e_a,\e_b)$ which is easily shown by the scaling property (i) of the kernel $W$ and the definition of $j$. 

%Therefore, $\kappa(T)=\sqrt{T}\kappa$ with
%\begin{align*}
%\kappa &  = -\frac{1}{4}\lim_{t \to \infty} \frac{1}{t} \sum_{i \in \Z} i^2 \langle (\e_i(t)-\e_i(0))(\e_0(t)- \e_0(0)) %\rangle_{\mathbb{P}_1}  .
%\end{align*}
%\begin{align*}
%C_1= \lim_{t \to \infty} \frac{1}{t} \sum_{i \in \Z} \langle J_{0,1} (t ) J_{i,i+1}(t) \rangle_{\mathbb{P}_1} .
%\end{align*}
Now, we rewrite $\e_i(t)-\e_i(0)=M_i(t)+J_{i-1,i}(t)-J_{i,i+1}(t)$ where $M_i(t)$ is a martingale and $J_{i,i+1}(t)=\int_0^t j(\e_i(s),\e_{i+1}(s)) ds$. Then, we have
\begin{align}
 &\sum_{i \in \Z} i^2 \langle (\e_i(t)-\e_i(0))(\e_0(t)- \e_0(0)) \rangle_{\mathbb{P}_T} \nonumber  \\
 & =  \sum_{i \in \Z} i^2 \langle (M_i(t)+J_{i-1,i}(t)-J_{i,i+1}(t))(M_0(t)+J_{-1,0}(t)-J_{0,1}(t)) \rangle_{\mathbb{P}_T} \nonumber \\
 &=  \sum_{i \in \Z} i^2 \langle M_i(t) M_0(t) \rangle_{\mathbb{P}_T}  +  \sum_{i \in \Z} i^2 \langle (J_{i-1,i}(t)-J_{i,i+1}(t) ) (J_{-1,0}(t)-J_{0,1}(t)) \rangle_{\mathbb{P}_T}
\end{align}
since $\langle M_i(t) J_{j-1,j}(t)\rangle_{\mathbb{P}_T}=0$ for any $i,j$. Now, since $M_i(t)$ has independent increments, we have
\begin{align}
 &  \langle M_i(t) M_0(t) \rangle_{\mathbb{P}_T} \nonumber \\
  & = -t \langle \e_i \big(j(\e_{-1},\e_0)-j(\e_0,\e_1) \big) + \big(j(\e_{i-1},\e_i)-j(\e_i,\e_{i+1}) \big)  \e_0 \rangle_{T}   \nonumber \\
 & = -2t \langle \e_i \big(j(\e_{-1},\e_0)-j(\e_0,\e_1) \big) \rangle_T. 
\end{align}
Then, by the summation by parts, 
\begin{align}
\sum_{i \in \Z} i^2  \langle M_i(t) M_0(t) \rangle_{\mathbb{P}_T}  = 2t \sum_{i \in \Z} (2i-1) \langle \e_i j(\e_0,\e_1) \rangle_1=4t \sum_{i \in \Z}i \langle \e_i j(\e_0,\e_1) \rangle_T
\end{align}
since $\langle \e_i j(\e_0,\e_1) \rangle_T = -\langle \e_{-i+1} j(\e_0,\e_1) \rangle_T$ for all $i$. In the same manner,
\begin{align}
\sum_{i \in \Z} i^2 \langle (J_{i-1,i}(t)-J_{i,i+1}(t) ) (J_{-1,0}(t)-J_{0,1}(t)) \rangle_{\mathbb{P}_T} = -2 \sum_{i \in \Z}  \langle J_{i,i+1}(t) J_{0,1}(t) \rangle_{\mathbb{P}_T}.
\end{align}
Therefore, the Green-Kubo formula can be rewritten in the sum of the \lq\lq static" term and the \lq\lq dynamic" term as
\begin{align}
& \kappa(T)   \nonumber \\
& = - \frac{1}{T^2} \sum_{i \in \Z} i \langle \e_i j(\e_0,\e_1) \rangle_1 - \frac{1}{T^2}\int_0^{\infty}   \sum_{i \in \Z} \langle j (\e_ {i} (0),\e_{i+1} (0) )  e^{tL} j (\e_0 (0),\e_1 (0)) \rangle_{\mathbb{P}_T}   dt \nonumber\\
 & = \kappa_s(T) - \frac{1}{T^2} \int_0^{\infty}   \sum_{i \in \Z} \langle j (\e_ {i} (0),\e_{i+1} (0) )  e^{tL} j (\e_0 (0),\e_1 (0)) 
   \rangle_{\mathbb{P}_T}   dt
\end{align}
where $L$ is the formal infinite volume generator for the stochastic process given by rate $W$.  
Here, we denote the static part by $\kappa_s(T)$ which is defined as 
\begin{align}\label{eq:static}
\kappa_s(T)= - \frac{1}{T^2}\sum_{i \in \Z} i \langle \e_i j(\e_0,\e_1) \rangle_T.
%= -  \frac{1}{T^2}\langle \e_1 j(\e_0,\e_1) \rangle_T =\frac{1}{2T^2}\langle (\e_0-\e_1) j(\e_0,\e_1) \rangle_T.
\end{align}
Since $L$ is a non-positive operator, the \lq\lq dynamic" term is non positive, hence $\kappa(T) \le \kappa_s(T)$ is generally true. 

\section{Main results}
In this section, we give our main result for the relations between the thermal conductivity and the static observables appearing in the conjecture (\ref{eq:staticrelations}). 
\begin{lemma}
For any mesoscopic mechanical model, 
\begin{align}\label{eq:scales}
& \kappa(T)  = \kappa \sqrt{T}, \quad \kappa_s(T)=\kappa_s\sqrt{T}, \quad \frac{1}{2}<(\e_a-\e_b) j(\e_a,\e_b) >_T  =\kappa_{1}T^{5/2},  \nonumber \\
&  \frac{1}{2}<h(\e_a,\e_b)>_T = \kappa_{2}, T^{5/2} \quad <\nu(\e_a,\e_b)>_T = \kappa_{f} \sqrt{T}
\end{align}
where
\begin{align}
& \kappa  =\kappa(1), \quad \kappa_s=\kappa_s(1), \quad \kappa_1=\frac{1}{2}<(\e_a-\e_b) j(\e_a,\e_b) >_1,  \nonumber \\
&  \kappa_2=\frac{1}{2}<h(\e_a,\e_b)>_1, \quad \kappa_f=<\nu(\e_a,\e_b)>_1.
\end{align}
\end{lemma}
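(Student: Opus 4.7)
The plan is to deduce everything from two elementary scaling facts: the homogeneity property (i) of the kernel $W$, and the self-similarity of the Gibbs density, $\nu_{eq}^T(\e)=T^{-1}\nu_{eq}^1(\e/T)$, which is clear from (\ref{eq:equilibriummeasure}). First I would substitute $\eta=c\tilde\eta$ in the definitions of $\nu$, $j$ and $h$ and invoke (i) with the same $c$; since $W$ acquires a factor $c^{-1/2}$ while the Jacobian contributes $c$, one obtains the pointwise scalings $\nu(c\e_a,c\e_b)=\sqrt{c}\,\nu(\e_a,\e_b)$, $j(c\e_a,c\e_b)=c^{3/2}j(\e_a,\e_b)$ and $h(c\e_a,c\e_b)=c^{5/2}h(\e_a,\e_b)$.

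Next I would insert these into each equilibrium average with $c=T$ and use the change of variables $\e_i=T\tilde\e_i$, under which the product density transforms cleanly as $\nu_{eq}^T(T\tilde\e)\,T\,d\tilde\e=\nu_{eq}^1(\tilde\e)\,d\tilde\e$. This yields the three identities involving $\kappa_f$, $\kappa_1$ and $\kappa_2$. The identity $\kappa_s(T)=\kappa_s\sqrt{T}$ then follows from the same change of variables applied to the Green-Kubo formula (\ref{eq:static}): each correlation $\langle\e_i\,j(\e_0,\e_1)\rangle_T$ picks up a factor $T\cdot T^{3/2}=T^{5/2}$ (one $T$ from $\e_i$, one $T^{3/2}$ from $j$), and the prefactor $T^{-2}$ leaves the desired $\sqrt{T}$.

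The only step that genuinely involves the dynamics is $\kappa(T)=\kappa\sqrt{T}$. I would start from the Green-Kubo decomposition of Section 3 and only need to handle the dynamic term. The master equation (\ref{eq:maindynamics}) combined with (i) implies that under $\e_i\mapsto\e_i/T$ the rate kernel is simply multiplied by $\sqrt{T}$; equivalently, the rescaled process $\tilde\e_i(t):=\e_i(t)/T$ starting from equilibrium at temperature $T$ has the same law as $\bar\e_i(\sqrt{T}\,t)$, where $\bar\e$ is the process starting from equilibrium at temperature $1$. Combining this observation with $j(T\e_a,T\e_b)=T^{3/2}j(\e_a,\e_b)$ and the substitution $t=s/\sqrt{T}$ in the dynamic integral produces a factor $T^3\cdot T^{-1/2}=T^{5/2}$ in front of the corresponding integral at temperature $1$, and dividing by $T^2$ gives the claimed $\sqrt{T}$.

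The main potential obstacle is the rigorous justification of this time rescaling for the infinite-volume generator $L$; however, once $\kappa(T)$ and the dynamic part of Green-Kubo are admitted as well-defined objects, the map $(\e,t)\mapsto(\e/T,\sqrt{T}\,t)$ is a genuine equivalence of the stochastic processes at temperatures $T$ and $1$, and the entire lemma reduces to repeated application of the homogeneity (i).
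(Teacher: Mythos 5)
Your proposal is correct and takes essentially the same route as the paper: both rest on the observation that the law of $(T\e_i)_i$ under $\nu^1_{eq}$ equals that of $(\e_i)_i$ under $\nu^T_{eq}$ (giving all the static scalings via homogeneity (i)), combined with the space--time equivalence of $\e(t)$ under $\mathbb{P}_T$ with $T\e(\sqrt{T}t)$ under $\mathbb{P}_1$ for the conductivity. The only cosmetic difference is that you apply the time rescaling to the dynamic term of the decomposed Green--Kubo formula, while the paper applies it directly to the mean-square-displacement form of $\kappa(T)$; both yield the same factor $T^{5/2}/T^2=\sqrt{T}$.
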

\begin{proof}
Note that he law of $(T\e_i)_{i=1}^N$ under $\nu^1_{eq}$ is equal to the law of $(\e_i)_{i=1}^N$ under $\nu^T_{eq}$. Then, by the condition (i), we have the scaling relations except for $\kappa(T)$. For $\kappa(T)$, by the scaling relation (i) again, we see that the distribution of $\e(t)=(\e_i(t))_{i \in \Z}$ under $\mathbb{P}_T$ is same as that of $T \e(\sqrt{T}t)=(T\e_i(\sqrt{T}t))_{i \in \Z}$ under $\mathbb{P}_1$. Therefore, we have
\begin{align}
\kappa(T)= & \lim_{t \to \infty} \frac{1}{t} \sum_{i \in \Z} i^2 \langle (\e_i(t)-\e_i(0))(\e_0(t)- \e_0(0)) \rangle_{\mathbb{P}_T} \nonumber \\
& =T^2 \lim_{t \to \infty}  \frac{1}{t}\sum_{i \in \Z} i^2 \langle (\e_i(\sqrt{T}t)-\e_i(0))(\e_0(\sqrt{T}t)- \e_0(0)) \rangle_{\mathbb{P}_1}  \nonumber \\
%&= T^2 \frac{1}{t} \sum_{i \in \Z}  \langle  \int_0^{\sqrt{T}t} j(\e_0(s), \e_1(s)) ds \int_0^{\sqrt{T}t} j(\e_i(s), \e_{i+1}(s)) ds \rangle_{\mathbb{P}_1} \\
& = T^{5/2} \lim_{t \to \infty} \frac{1}{t} \sum_{i \in \Z} i^2 \langle (\e_i(t)-\e_i(0))(\e_0(t)- \e_0(0)) \rangle_{\mathbb{P}_1}. 
\end{align}
The second equality follows from the change of variable from $\sqrt{T}t$ to $t$. 
\end{proof}

\begin{proposition}\label{prop:1}
For any mesoscopic mechanical model,
\begin{equation}
\kappa_s=\kappa_1=\kappa_2.
\end{equation}
\end{proposition}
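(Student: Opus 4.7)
The plan is to prove the two identities $\kappa_s = \kappa_1$ and $\kappa_1 = \kappa_2$ separately. The first is essentially combinatorial and uses only property (ii) together with the product form of the equilibrium measure; the second is where the detailed balance condition (iii) does the real work.

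For $\kappa_s = \kappa_1$, observe that $j(\e_0,\e_1)$ depends only on $\e_0$ and $\e_1$, so under the product equilibrium measure the remaining coordinates factor out: for $i \notin \{0,1\}$ we have $\langle \e_i\,j(\e_0,\e_1)\rangle_1 = \langle \e_i\rangle_1\,\langle j(\e_0,\e_1)\rangle_1$. Property (ii) yields $j(\e_a,\e_b)=-j(\e_b,\e_a)$ (by the same change of variable used to define $j$), while $\nu^1_{eq}(\e_0)\nu^1_{eq}(\e_1)$ is symmetric in $(\e_0,\e_1)$; hence $\langle j(\e_0,\e_1)\rangle_1 = 0$. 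Only the terms $i=0$ and $i=1$ survive in the sum defining $\kappa_s$, giving $\kappa_s = -\langle \e_1 j(\e_0,\e_1)\rangle_1$. A second use of the swap symmetry $\e_0\leftrightarrow\e_1$ yields $\langle \e_0 j(\e_0,\e_1)\rangle_1 = -\langle \e_1 j(\e_0,\e_1)\rangle_1$, and averaging produces
$$\kappa_s = \tfrac12\langle (\e_0-\e_1)\, j(\e_0,\e_1)\rangle_1 = \kappa_1.$$

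For $\kappa_1 = \kappa_2$, write
$$2\kappa_1 = \int_D (\e_a-\e_b)\,\eta\,\nu^1_{eq}(\e_a)\nu^1_{eq}(\e_b)\, W(\e_a,\e_b|\e_a-\eta,\e_b+\eta)\,d\e_a\,d\e_b\,d\eta,$$
with $D=\{\e_a,\e_b>0,\ -\e_b\le\eta\le\e_a\}$. Because $\nu^1_{eq}(\e)\propto\e^{d/2-1}e^{-\e}$ and the factor $e^{-(\e_a+\e_b)}$ is invariant under $(\e_a,\e_b)\mapsto(\e_a-\eta,\e_b+\eta)$, property (iii) upgrades to the reversibility identity
$$\nu^1_{eq}(\e_a)\nu^1_{eq}(\e_b)\, W(\e_a,\e_b|\e_a-\eta,\e_b+\eta) = \nu^1_{eq}(\e_a-\eta)\nu^1_{eq}(\e_b+\eta)\, W(\e_a-\eta,\e_b+\eta|\e_a,\e_b).$$
I would then apply the involutive change of variables $(\e_a,\e_b,\eta)\mapsto(\e_a-\eta,\e_b+\eta,-\eta)$, which has unit Jacobian and preserves $D$. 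The identity above converts the transformed rate back to its original form, while $(\e_a-\e_b)\,\eta$ becomes $(\e_a-\e_b-2\eta)(-\eta) = -(\e_a-\e_b)\,\eta + 2\eta^2$. So
$$2\kappa_1 = -2\kappa_1 + 2\langle h(\e_a,\e_b)\rangle_1,$$
yielding $\kappa_1 = \tfrac12\langle h(\e_a,\e_b)\rangle_1 = \kappa_2$.

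The only delicate point is matching (iii) with the change of variables; conceptually both identities are one-step time-reversal arguments (the second applied to the jump $\eta\mapsto-\eta$) at equilibrium, and I anticipate no serious obstacle beyond careful bookkeeping of the domain $D$ and the substitutions.
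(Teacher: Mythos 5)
Your proof is correct. The first half ($\kappa_s=\kappa_1$) is essentially identical to the paper's: the product structure of the equilibrium measure kills all terms $i\notin\{0,1\}$ in the sum defining $\kappa_s$, and the antisymmetry of $j$ symmetrizes $-\langle \e_1 j(\e_0,\e_1)\rangle_1$ into $\tfrac12\langle(\e_0-\e_1)j(\e_0,\e_1)\rangle_1$. The second half ($\kappa_1=\kappa_2$) takes a genuinely different route from the paper's Appendix B. There, the author exploits independence of $\e_a+\e_b$ and $\e_a/(\e_a+\e_b)$ under the product gamma measure to reduce $\kappa_1$ and $\kappa_2$ to integrals of $\tilde{W}(\a,\b)=W(\a,1-\a|\b,1-\b)(\a(1-\a))^{d/2-1}$ over the unit square, and then uses the symmetry $\tilde{W}(\a,\b)=\tilde{W}(\b,\a)$ (a consequence of (iii)) together with $\int\int(\a-\b)\tilde{W}\,d\b\,d\a=0$ to match the two quadratic forms. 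You instead work directly in the $(\e_a,\e_b,\eta)$ variables: you upgrade (iii) to the full detailed-balance identity $\nu^1_{eq}(\e_a)\nu^1_{eq}(\e_b)W(\e_a,\e_b|\e_a-\eta,\e_b+\eta)=\nu^1_{eq}(\e_a-\eta)\nu^1_{eq}(\e_b+\eta)W(\e_a-\eta,\e_b+\eta|\e_a,\e_b)$ (valid since $e^{-(\e_a+\e_b)}$ is conserved by the exchange) and apply the measure-preserving involution $(\e_a,\e_b,\eta)\mapsto(\e_a-\eta,\e_b+\eta,-\eta)$, under which $(\e_a-\e_b)\eta\mapsto-(\e_a-\e_b)\eta+2\eta^2$, giving $4\kappa_1=2\langle h\rangle_1$ directly. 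Your one-step time-reversal argument is more conceptual and self-contained (it never needs the simplex decomposition), and makes transparent that only reversibility is being used; the paper's normalization via $\tilde{W}$ is heavier here but pays off later, since the same explicit expressions on the unit square are what make the $\kappa_1=\kappa_f$ criterion of Proposition \ref{prop:2} computable. (The paper's remark that $\kappa_1=\kappa_2$ also follows by comparing the two expressions for $\kappa_s$ is a third route, distinct from both.)
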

\begin{proof}
The first equality follows from the simple observation 
\begin{align}\label{eq:static}
\kappa_s= - \sum_{i \in \Z} i \langle \e_i j(\e_0,\e_1) \rangle_1
= -  \langle \e_1 j(\e_0,\e_1) \rangle_1 =\frac{1}{2}\langle (\e_0-\e_1) j(\e_0,\e_1) \rangle_1
\end{align}
since the equilibrium measure is product and the current is antisymmetric. The second equality is shown in Appendix B.
\end{proof}

\begin{proposition}\label{prop:2}
Assume $W$ is a mesoscopic mechanical kernel. Then $\kappa_1=\kappa_f$ holds if and only if the kernel $W$ satisfies
\begin{align}\label{eq:3=4}
\int_0^1 \int_0^1 &  \tilde{W}(\a, \b) d\b d\a   =(d+\frac{3}{2})(d+\frac{1}{2})\int_0^1 \int_0^1 \a (\a-\beta)  \tilde{W}(\a, \b) d\b d\a
\end{align}
where $\tilde{W}(\a, \b)=W(\a, 1-\a|\beta,1-\beta ) (\a(1-\a))^{\frac{d}{2} -1}$. 
\end{proposition}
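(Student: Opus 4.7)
The plan is to reduce both $\kappa_f$ and $\kappa_1$ to explicit integrals of $\tilde{W}$ over the unit square by exploiting the scaling condition (i) to factor out the total energy. First I would pass to the coordinates $E = \e_a + \e_b$ and $\alpha = \e_a/E$ (with Jacobian $E$). The product equilibrium density $\nu_{eq}^1(\e_a)\nu_{eq}^1(\e_b)\,d\e_a\,d\e_b$ then factorizes as
\begin{equation*}
\frac{1}{\Gamma(d/2)^2}(\alpha(1-\alpha))^{d/2-1}E^{d-1}e^{-E}\,dE\,d\alpha,
\end{equation*}
i.e.\ a product of a $\Gamma(d)$-density in $E$ and a Beta$(d/2,d/2)$-type weight in $\alpha$.

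Next, inside each $\eta$-integral defining $\nu$ or $j$, I would substitute $\beta = \alpha - \eta/E$, which ranges over $[0,1]$ as $\eta$ runs over $[-\e_b,\e_a]$, and apply (i) with $c = 1/E$ to get
\begin{equation*}
W(\e_a,\e_b\,|\,\e_a-\eta,\e_b+\eta) = E^{-1/2}\,W(\alpha,1-\alpha\,|\,\beta,1-\beta).
\end{equation*}
This makes $\nu(\e_a,\e_b) = E^{1/2}\int_0^1 W(\alpha,1-\alpha|\beta,1-\beta)\,d\beta$ and $j(\e_a,\e_b) = E^{3/2}\int_0^1 (\alpha-\beta)W(\alpha,1-\alpha|\beta,1-\beta)\,d\beta$, while $\e_a - \e_b = E(2\alpha-1)$. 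Folding the factor $(\alpha(1-\alpha))^{d/2-1}$ into $W$ to form $\tilde{W}$ and performing the $E$-integrals (which are simply Gamma-moments of $E^{d-1}e^{-E}$) yields
\begin{align*}
\kappa_f &= \frac{\Gamma(d+1/2)}{\Gamma(d/2)^2}\int_0^1\!\!\int_0^1 \tilde{W}(\alpha,\beta)\,d\beta\,d\alpha, \\
\kappa_1 &= \frac{\Gamma(d+5/2)}{2\,\Gamma(d/2)^2}\int_0^1\!\!\int_0^1 (2\alpha-1)(\alpha-\beta)\tilde{W}(\alpha,\beta)\,d\beta\,d\alpha.
\end{align*}

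For the last step I would invoke the symmetries. Condition (ii) gives $\tilde{W}(\alpha,\beta) = \tilde{W}(1-\alpha,1-\beta)$, so the substitution $(\alpha,\beta)\mapsto(1-\alpha,1-\beta)$ immediately kills $\int\!\!\int(\alpha-\beta)\tilde{W}\,d\beta\,d\alpha$. Expanding $(2\alpha-1)(\alpha-\beta)=2\alpha(\alpha-\beta)-(\alpha-\beta)$ then leaves $\int\!\!\int(2\alpha-1)(\alpha-\beta)\tilde{W} = 2\int\!\!\int \alpha(\alpha-\beta)\tilde{W}$. Since $\Gamma(d+5/2)/\Gamma(d+1/2) = (d+3/2)(d+1/2)$, the identity $\kappa_1 = \kappa_f$ becomes precisely condition (\ref{eq:3=4}); both implications fall out simultaneously.

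The only delicate part is the bookkeeping of scaling exponents from (i) and of signs in the $\eta\to\beta$ change of variables; once those are in place, the separation of the double integrals into an $E$-factor (a ratio of Gamma functions) times a functional of $\tilde{W}$ on the simplex is automatic, and the whole proposition reduces to a single linear identity in two functionals of $\tilde{W}$.
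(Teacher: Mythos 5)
Your proposal is correct and follows essentially the same route as the paper: pass to the coordinates $E=\e_a+\e_b$, $\a=\e_a/E$, use the scaling condition (i) to obtain $\kappa_f=\frac{\Gamma(d+1/2)}{\Gamma(d/2)^2}\int\!\!\int\tilde W$ and $\kappa_1=\frac{\Gamma(d+5/2)}{\Gamma(d/2)^2}\int\!\!\int(\a-\tfrac12)(\a-\b)\tilde W$, and then reduce to (\ref{eq:3=4}) via $\int\!\!\int(\a-\b)\tilde W=0$ and $\Gamma(d+5/2)/\Gamma(d+1/2)=(d+\tfrac32)(d+\tfrac12)$. The only (immaterial) difference is that you derive the vanishing of $\int\!\!\int(\a-\b)\tilde W$ from the symmetry $\tilde W(\a,\b)=\tilde W(1-\a,1-\b)$ coming from condition (ii), whereas the paper uses $\tilde W(\a,\b)=\tilde W(\b,\a)$ coming from the detailed balance condition (iii); both are valid.
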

We give its proof in Appendix B again. Note that by the assumptions (ii) and (iii), we have $\tilde{W}(\a,\b)=\tilde{W}(\b,\a)=\tilde{W}(1-\a,1-\b)=\tilde{W}(1-\b,1-\a)$. We can check $W_{GG,2}$ and $W_{GG,3}$ satisfy the condition (\ref{eq:3=4}) by the direct computation as follows.

\begin{example}[GG-model, $d=2$, \cite{GG08}]
By the definition of the kernel,
\begin{align}
& \sqrt{\frac{\pi^3}{2}} \tilde{W}_{GG,2}(\a, \b) \nonumber \\
&=  \sqrt{\frac{1}{ ( \a \wedge \b) \vee (  (1-\a) \wedge (1-\b) )}} 
K \Big(\sqrt{\frac{\a \wedge (1-\a) \wedge \b \wedge (1-\b) }{( \a \wedge \b) \vee (  (1-\a) \wedge (1-\b) )}}\Big). 
\end{align}
Then, (\ref{eq:3=4}) is equivalent to 
\begin{align}
& \int_0^1 \int_0^1 \int_0^{\frac{\pi}{2}}  \frac{d\theta d\b d\a }{ \sqrt{ ( \a \wedge \b) \vee (  (1-\a) \wedge (1-\b) ) - \a \wedge (1-\a) \wedge \b \wedge (1-\b) \sin \theta^2}} \nonumber  \\
&  =\frac{35}{4}\int_0^1 \int_0^1 \int_0^{\frac{\pi}{2}}  \frac{\a (\a-\b) \ d\theta d\b d\a }{ \sqrt{ ( \a \wedge \b) \vee (  (1-\a) \wedge (1-\b) ) - \a \wedge (1-\a) \wedge \b \wedge (1-\b) \sin \theta^2}}
\end{align}
which we can show by the direct calculation. Therefore, by Proposition \ref{prop:1} and \ref{prop:2}, $\kappa_s=\kappa_1=\kappa_2=\kappa_f$ holds for the kernel $W_{GG,2}$.
\end{example}

\begin{example}[GG-model, $d=3$, \cite{GG09}]
In the same manner for the $d=2$ case, we have
\begin{align}
\sqrt{\frac{8}{\pi}} \tilde{W}_{GG,3}(\a,\b) = \sqrt{ \a \wedge (1-\a) \wedge \b \wedge (1-\b)}.
\end{align}
Then, (\ref{eq:3=4}) is equivalent to 
\begin{align}
& \int_0^1 \int_0^1  \sqrt{ \a \wedge (1-\a) \wedge \b \wedge (1-\b)} d\a d\b \nonumber  \\ 
&  =\frac{63}{4}\int_0^1 \int_0^1  \a (\a-\b) \sqrt{ \a \wedge (1-\a) \wedge \b \wedge (1-\b)} d\a d\b
\end{align}
which we can show by the direct calculation again. Therefore, by Proposition \ref{prop:1} and \ref{prop:2}, $\kappa_s=\kappa_1=\kappa_2=\kappa_f$ holds for the kernel $W_{GG,3}$.
\end{example}

\begin{remark}
We could not find any physical interpretation of the condition (\ref{eq:3=4}) so far.
\end{remark}

The next theorem shows the relation between the thermal conductivity and the other static observables.
\begin{theorem}\label{thm:1}
For any mesoscopic mechanical model,
$\kappa=\kappa_s$ if and only if 
\begin{align}\label{eq:gradientcurrent}
j(\e_a,\e_b)=C(\e_a^{3/2}-\e_b^{3/2})
\end{align}
for some constant $C$.
\end{theorem}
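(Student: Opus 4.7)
The strategy is to recast $\kappa=\kappa_s$ using the Green-Kubo decomposition of Section~3. Under the detailed balance condition (iii), the generator $L$ is non-positive self-adjoint in $L^{2}(\nu^{T}_{eq})$, so the dynamic contribution
\[
D(T)\,:=\,\frac{1}{T^{2}}\int_{0}^{\infty}\sum_{i\in\Z}\langle j(\e_{i},\e_{i+1})\,e^{tL}\,j(\e_{0},\e_{1})\rangle_{\mathbb{P}_{T}}\,dt
\]
is non-negative, and $\kappa=\kappa_{s}$ is equivalent to $D(T)=0$. By Lemma~1 we may work at $T=1$. The plan is to show that $D(1)=0$ if and only if $j(\e_{0},\e_{1})$ admits a ``gradient'' representation $g(\e_{0})-g(\e_{1})$ for some single-variable $g$, and then invoke property (i) to force $g(\e)=C\e^{3/2}$.

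For the implication $j(\e_{a},\e_{b})=C(\e_{a}^{3/2}-\e_{b}^{3/2})\Rightarrow\kappa=\kappa_{s}$, set $g(\e)=C\e^{3/2}$, so that $j(\e_{i},\e_{i+1})=g(\e_{i})-g(\e_{i+1})$. Truncating the space sum yields the telescoping identity
\[
\sum_{|i|\le N}j(\e_{i},\e_{i+1})=g(\e_{-N})-g(\e_{N+1}),
\]
so that
\[
\sum_{|i|\le N}\langle j(\e_{i},\e_{i+1})\,e^{tL}j(\e_{0},\e_{1})\rangle_{\mathbb{P}_{1}}=\langle[g(\e_{-N})-g(\e_{N+1})]\,e^{tL}j(\e_{0},\e_{1})\rangle_{\mathbb{P}_{1}}.
\]
Because $\nu^{1}_{eq}$ is a product measure and $j(\e_{0},\e_{1})$ has zero mean by antisymmetry, the right-hand side is a covariance between far-apart sites that vanishes as $N\to\infty$, uniformly enough in $t$ (via spatial-decorrelation estimates for $e^{tL}$) to pass the limit inside the time integral. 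Hence $D(1)=0$.

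For the converse direction, I would use the non-gradient variational characterization: $D(1)$ coincides with a squared $H_{-1}(L)$-type seminorm of the current, so $D(1)=0$ forces $j(\e_{0},\e_{1})$ to be approximable in $L^{2}(\nu^{1}_{eq})$ by functions of the form $Lh$, modulo the only conserved mode $\sum_{i}\e_{i}$. The antisymmetry of $j$ under the reflection swapping $0\leftrightarrow1$ eliminates any projection onto that conserved mode. A support analysis using the nearest-neighbor structure $L=\sum_{i}L_{i,i+1}$ and the product form of $\nu^{1}_{eq}$ then forces the approximating $h$ to reduce to a single-site function $\varphi(\e_{0})$, yielding $j(\e_{0},\e_{1})=\varphi(\e_{1})-\varphi(\e_{0})$. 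Finally, property (i) together with the substitution $\eta=c\tilde\eta$ in the definition of $j$ gives $j(c\e_{a},c\e_{b})=c^{3/2}j(\e_{a},\e_{b})$, so $g=-\varphi$ is homogeneous of degree $3/2$ and must equal $C\e^{3/2}$.

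The main obstacle will be the converse direction: rigorously identifying $D(1)$ with an $H_{-1}$-type seminorm in the continuous-energy infinite-volume setting, and performing the support reduction of $h$ to a one-variable function. Both points adapt arguments from Varadhan's non-gradient method; the reduction is delicate because it requires comparing the spatial supports of $j(\e_{0},\e_{1})$ and $Lh$, using essentially that $L_{i,i+1}$ couples only sites $i$ and $i+1$ and that $\nu^{1}_{eq}$ factorizes.
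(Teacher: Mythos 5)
There is a genuine gap in both directions of your argument, and in both cases the missing ingredient is the same: the paper does not work with the time-integrated ``dynamic'' term at all, but with the variational formula
\[
\kappa=\tfrac12\inf_{f}\Bigl\langle\int_{-\e_1}^{\e_0}d\eta\,W(\e_0,\e_1|\e_0-\eta,\e_1+\eta)\bigl(\eta+\nabla_{0,1,\eta}\Sigma_f\bigr)^2\Bigr\rangle_1,
\]
under which $\kappa=\kappa_s$ is equivalent to the purely \emph{static} condition $\langle\Sigma_f\,j(\e_0,\e_1)\rangle_1=0$ for every cylinder $f$. Your forward direction instead tries to show $D(1)=0$ by truncating the spatial sum, telescoping, and passing the limit $N\to\infty$ inside $\int_0^\infty dt$ via ``spatial-decorrelation estimates for $e^{tL}$, uniformly in $t$.'' No such uniform estimate is available: for $t\sim N^2$ the semigroup spreads $j(\e_0,\e_1)$ over $O(N)$ sites, so the covariance with $g(\e_{-N})-g(\e_{N+1})$ is not negligible uniformly over the infinite time integral. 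With the variational formula this difficulty evaporates: if $j=h(\e_0)-h(\e_1)$ then $\langle\Sigma_f\,j\rangle_1=\sum_i\bigl(\langle\tau_if\,h(\e_0)\rangle_1-\langle\tau_if\,h(\e_1)\rangle_1\bigr)$ telescopes exactly by translation invariance of the product measure, with no dynamics involved.

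For the converse you propose an $H_{-1}$-seminorm identification plus a ``support analysis'' in the spirit of Varadhan's non-gradient method, and you correctly flag this as the main obstacle --- but that is precisely the part you have not proved, and it is where the theorem's content sits. The paper's key step is far more elementary and is absent from your sketch: test the condition $\langle\Sigma_f\,j\rangle_1=0$ with the single choice $f=j(\e_0,\e_1)$ and compute directly, using only the antisymmetry of $j$, its nearest-neighbour support, and the product form of the equilibrium measure, that
\[
\langle\Sigma_{j}\,j(\e_0,\e_1)\rangle_1=\bigl\langle\bigl(j(\e_0,\e_1)+\tilde j(\e_1)-\tilde j(\e_0)\bigr)^2\bigr\rangle_1,
\qquad
\tilde j(\e):=\frac{1}{\Gamma(\frac d2)}\int_0^\infty j(\e,x)\,x^{\frac d2-1}e^{-x}\,dx .
\]
This perfect-square identity forces $j(\e_0,\e_1)=\tilde j(\e_0)-\tilde j(\e_1)$ at once, and even tells you what the potential is (the one-site conditional expectation of $j$); no reduction of an approximating sequence $Lh$ to a single-site function is needed, and indeed your claim that such a reduction ``forces $h$ to reduce to $\varphi(\e_0)$'' is not justified as stated. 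Your final step --- using property (i) to get $j(c\e_a,c\e_b)=c^{3/2}j(\e_a,\e_b)$ and hence $h(\e)=C\e^{3/2}$ after normalizing $h(0)=0$ --- does agree with the paper's concluding lemma.
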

We give its proof in Appendix A.
Since the current of the GG model with the kernel (\ref{eq:2dimkernel}) or (\ref{eq:3dimkernel}) is not a gradient of any function, $\kappa < \kappa_s$. 
%Namely, \lq\lq $C_3=C_4=C_5$ holds for general stochastic energy exchange models obtained from locally confined collision models" and it is true for both of the cases studied by Gaspard and Gilbert. 
%If these identities hold, by the relation (\ref{eq:static}), we have the expression
% $\kappa_s(T)=\frac{2}{d}\langle \nu(\e_a,\e_b)\rangle_T$ and $\kappa(T) \le  \frac{2}{d} \langle \nu(\e_a,\e_b)\rangle_T$. 
 
%Now, we consider whether $C_3=C_4=C_5$ holds for a general kernel $W$. 

%On the other hand, $C_2=C_3$ is always true under our assumption.

\section{Conclusions}

To summarize, we have introduced a class of rate kernels, which is a generalization of GG-models, and obtained a necessary and sufficient conditions for the equalities (\ref{eq:realconjecture}) to hold in this class. 

We showed that the equalities $\kappa_s=\kappa_1=\kappa_2$ hold universally in this class, though the equality between the static conductivity and the frequency of energy exchange holds only under the additional condition (\ref{eq:3=4}). 

We also showed that the equality between the thermal conductivity and the static conductivity holds if and only if the model satisfies the gradient condition. Also we see that if the gradient condition is satisfied, the current should be written an explicit form.

From the above observations, we conclude that $\sqrt{T} = \langle \nu(\e_a,\e_b) \rangle_T =   \kappa_s(T) >  \kappa (T)$ for the GG-models $W_{GG,2}$ and $W_{GG,3}$. 
%From the above observations, we conclude that 
%\begin{align*}
%\kappa(T)= \kappa_s(T)=\frac{2}{d} \langle \nu(\e_a,\e_b) \rangle_T
%\end{align*}
%holds only if $W$ satisfies (\ref{eq:gradientcurrent}) and (\ref{eq:3=4}), and  
%\begin{align*}
%\kappa(T) \le  \kappa_s(T)= \frac{2}{d} \langle \nu(\e_a,\e_b) \rangle_T 
%\end{align*}
%holds under the condition (\ref{eq:3=4}), and 
%\begin{align*}
%\kappa(T) \le  \kappa_s(T)=\frac{1}{dT^2} \langle (\e_a-\e_b)j(\e_a,\e_b) \rangle_T=\frac{1}{dT^2} \langle h(\e_a,\e_b) \rangle_T
%\end{align*}
%holds for all stochastic energy exchange models obtained from the collision dynamics of locally confined hard-balls.
%\begin{equation*}
%E(t)=\int_0^{\frac{\pi}{2}} \sqrt{1-t^2 \sin^2\theta} d\theta.
%\end{equation*}

\appendix

\section{Gradient condition and the thermal conductivity}
We give a proof for Theorem \ref{thm:1}. For general reversible stochastic models including our dynamics, the variational formula for the thermal conductivity is known as
\begin{align}
 \kappa= \frac{1}{2} \inf_{f} \langle \int^{\e_0}_{-\e_1} d\eta W(\e_0,\e_1| \e_0-\eta,\e_1+\eta) \Big(\eta + \nabla_{0,1,\eta}\Sigma_f \Big)^2 \rangle_1
\end{align}
where $\nabla_{0,1,\eta}\Sigma_f=\Sigma_f ( \dots, \e_0-\eta,\e_1+\eta, \dots) -\Sigma_f ( \dots, \e_0,\e_1, \dots)$, $\Sigma_f=\sum_{i \in \Z} \tau_i f$ and the infimum is taken over all cylinder functions $f$ (cf. \cite{DFGW}, \cite{Sp}). $\kappa_s$ is given by the choice $f=0$ in the above variational formula : 
\begin{align}\label{eq:static2}
\kappa_s = \frac{1}{2} < \int^{\e_0}_{-\e_1} d\eta W(\e_0,\e_1| \e_0-\eta,\e_1+\eta) \eta^2 >_1 =  \frac{1}{2} < h(\e_0,\e_1)>_1.
\end{align}

%Now, we prepare one simple lemma for the relation of the gradient condition and the explicit expression of the current.
Then, we have
\begin{align}
2 & \kappa=  2 \kappa_s+ \inf_{f} \Big\{ - 2  \langle  \Sigma_{f} \ j(\e_0,\e_1)\rangle_1  \nonumber \\
 & +   \langle  \int^{\e_0}_{-\e_1} d\eta W(\e_0,\e_1| \e_0-\eta,\e_1+\eta) \Big(\nabla_{0,1,\eta}\Sigma_f  \Big)^2 \rangle_1 \Big\}
 \end{align}
so $\kappa=\kappa_s$ if and only if $\langle  \Sigma_{f} \ j(\e_0,\e_1)\rangle_1=0$ for any cylinder function $f$. Now, we give a key lemma.

\begin{lemma}
$\langle  \Sigma_{j(\e_0,\e_1)} \ j(\e_0,\e_1)\rangle_1= \langle   \big( j(\e_0,\e_1)+\tilde{j}(\e_1) -\tilde{j}(\e_0) \big)^2 \rangle_1$ where $\tilde{j}(\e_0):=\frac{1}{\Gamma(\frac{d}{2})}\int_0^{\infty} j(\e_0,x) x^{\frac{d}{2}-1}\exp(-x) dx$. In particular, $\langle  \Sigma_{j(\e_0,\e_1)} \ j(\e_0,\e_1)\rangle_1=0$ if and only if $j(\e_0,\e_1)= \tilde{j}(\e_0) -\tilde{j}(\e_1)$.
\end{lemma}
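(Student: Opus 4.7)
The plan is to compute both sides directly using only (a) that the equilibrium measure at temperature $1$ is a product measure, (b) antisymmetry of the current $j$, and (c) the observation that $\tilde{j}$ is nothing but the conditional expectation of $j$ given one endpoint. Concretely, under $\nu_{eq}^1$ the identity
\[
\tilde{j}(\e_0)=\int_0^\infty j(\e_0,x)\,\nu_{eq}^1(x)\,dx = \mathbb{E}[j(\e_0,\e_1)\mid \e_0]
\]
holds, and combined with $j(\e_0,\e_1)=-j(\e_1,\e_0)$ one also gets $\mathbb{E}[j(\e_0,\e_1)\mid \e_1]=-\tilde{j}(\e_1)$. In particular $\langle \tilde{j}(\e_0)\rangle_1 = \langle j(\e_0,\e_1)\rangle_1 = 0$ by antisymmetry.

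For the left-hand side, I expand $\Sigma_{j(\e_0,\e_1)}=\sum_{i\in\Z} j(\e_i,\e_{i+1})$ and split the sum $\sum_i\langle j(\e_i,\e_{i+1})j(\e_0,\e_1)\rangle_1$ by the intersection pattern of $\{i,i+1\}$ with $\{0,1\}$. For $|i|\ge 2$, the four indices are disjoint, so independence plus the vanishing mean of $j$ kills the term. For $i=0$ the term is $\langle j(\e_0,\e_1)^2\rangle_1$. For $i=1$, conditioning on $\e_1$ gives $\langle \tilde{j}(\e_1)\cdot(-\tilde{j}(\e_1))\rangle_1=-\langle \tilde{j}(\e_0)^2\rangle_1$, and similarly for $i=-1$ conditioning on $\e_0$ gives $-\langle \tilde{j}(\e_0)^2\rangle_1$. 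Hence
\[
\langle \Sigma_{j(\e_0,\e_1)}\, j(\e_0,\e_1)\rangle_1 = \langle j(\e_0,\e_1)^2\rangle_1 - 2\langle \tilde{j}(\e_0)^2\rangle_1.
\]

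For the right-hand side, I just expand the square $\bigl(j(\e_0,\e_1)+\tilde{j}(\e_1)-\tilde{j}(\e_0)\bigr)^2$ and evaluate each of the six resulting expectations using the same three facts: $\langle \tilde{j}(\e_0)^2\rangle_1=\langle \tilde{j}(\e_1)^2\rangle_1$ by translation invariance; $\langle \tilde{j}(\e_0)\tilde{j}(\e_1)\rangle_1 = \langle \tilde{j}(\e_0)\rangle_1^2=0$ by the product structure; $\langle j(\e_0,\e_1)\tilde{j}(\e_1)\rangle_1 = -\langle \tilde{j}(\e_1)^2\rangle_1$ and $\langle j(\e_0,\e_1)\tilde{j}(\e_0)\rangle_1 = \langle \tilde{j}(\e_0)^2\rangle_1$ by the conditional-expectation interpretation above. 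The cross terms combine to $-2\langle \tilde{j}(\e_0)^2\rangle_1$ in total, matching the LHS.

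Finally, the "in particular" statement is immediate: the RHS is the integral of a non-negative quantity against the (strictly positive) product density, so it vanishes if and only if $j(\e_0,\e_1)=\tilde{j}(\e_0)-\tilde{j}(\e_1)$ almost surely. There is no real obstacle here; the only thing to be careful about is the bookkeeping of signs in the $i=\pm 1$ terms and in the conditional expectations, where antisymmetry of $j$ must be applied consistently. Everything else is a direct product-measure computation.
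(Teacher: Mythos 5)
Your proof is correct and follows essentially the same route as the paper: you reduce $\langle \Sigma_{j}\, j\rangle_1$ to the three surviving terms $i=-1,0,1$, identify the off-diagonal ones with $-\langle \tilde{j}(\e_0)^2\rangle_1$, and expand the square on the right-hand side, with all signs handled correctly. The only difference is cosmetic: you phrase $\tilde{j}$ as a conditional expectation where the paper writes out the corresponding triple integrals against the gamma densities explicitly.
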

\begin{proof}
Since $j$ is antisymmetric and the equilibrium measure is translation invariant,
\begin{align}
  \langle & j(\e_{-1},\e_0) \ j(\e_0,\e_1) \rangle_1  =   \langle  j(\e_{1},\e_2)  \ j(\e_0,\e_1) \rangle_1 \nonumber \\
  & = \frac{1}{\Gamma(\frac{d}{2})^3}\int_{0}^{\infty}\int_{0}^{\infty} \int_{0}^{\infty} j(x,y)j(y,z)(xyz)^{\frac{d}{2}-1}\exp(-x-y-z)dxdydz \nonumber \\
   & =\frac{1}{\Gamma(\frac{d}{2})^2} \int_{0}^{\infty}\int_{0}^{\infty}  j(x,y)\tilde{j}(y) (xy)^{\frac{d}{2}-1}\exp(-x-y)dxdy  =  \langle  j(\e_0,\e_1) \tilde{j}(\e_1) \rangle_1 \nonumber \\
  & = - \frac{1}{\Gamma(\frac{d}{2})^2} \int_{0}^{\infty}\int_{0}^{\infty}  j(y,x)\tilde{j}(y) (xy)^{\frac{d}{2}-1}\exp(-x-y)dxdy = -  \langle  j(\e_0,\e_1) \tilde{j}(\e_0) \rangle_1 \nonumber \\
  & = - \frac{1}{\Gamma(\frac{d}{2})} \int_{0}^{\infty}  \tilde{j}(y)^2 y^{\frac{d}{2}-1}\exp(-y)dy =- \langle   \tilde{j}(\e_0)^2 \rangle_1
\end{align}
and $\langle   \tilde{j}(\e_0) \rangle_1=0$. Therefore,  we have
\begin{align}
 \langle  \Sigma_{j(\e_0,\e_1)} \ j(\e_0,\e_1) \rangle_1 & = \langle  (j(\e_{-1},\e_0) + j(\e_0,\e_1)+j(\e_1,\e_2) ) \ j(\e_0,\e_1) \rangle_1 \nonumber\\
 &= \langle  j(\e_0,\e_1) \ j(\e_0,\e_1)  \rangle_1 - 2\langle  \tilde{j}(\e_0)  \ \tilde{j}(\e_0) \rangle_1 \nonumber \\
& = \langle   \big( j(\e_0,\e_1)+\tilde{j}(\e_1) -\tilde{j}(\e_0) \big)^2 \rangle_1. 
\end{align} 
\end{proof}

%\begin{lemma}
%$j(\e_0,\e_1)=\tilde{j}(\e_0)-\tilde{j}(\e_1)$ if and only if the system is gradient where $\tilde{j}(\e_0):=\frac{1}{\Gamma(\frac{d}{2})}\int_0^{\infty} j(\e_0,x) x^{\frac{d}{2}-1}\exp(-x) dx$.
%\end{lemma}
%\begin{proof}
%If $j(\e_0,\e_1)=h(\e_0)-h(\e_1)$ for some function $h$, then 
%\begin{align*}
%\tilde{j}(\e_0) & =\frac{1}{\Gamma(\frac{d}{2})}\int_0^{\infty} j(\e_0,x) \e_1^{\frac{d}{2}-1}\exp(-x) dx \\
%& = h(\e_0)-\frac{1}{\Gamma(\frac{d}{2})}\int_0^{\infty} h(x) x^{\frac{d}{2}-1}\exp(-x) dx =h(\e_0)-A
%\end{align*}
%where $A=\frac{1}{\Gamma(\frac{d}{2})}\int_0^{\infty} h(x) x^{\frac{d}{2}-1}\exp(-x) dx$ and in the same manner, $\tilde{j}(\e_1)=h(\e_1)-A$. Therefore, we get $j(\e_0,\e_1)=\tilde{j}(\e_0)-\tilde{j}(\e_1)$. The opposite is clear by the definition.
%\end{proof}

\begin{proposition}\label{prop:A1}
$\kappa=\kappa_s$ if and only if the current is written as a gradient form.
\end{proposition}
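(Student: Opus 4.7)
The plan is to combine the variational reduction derived immediately before the Lemma with the Lemma itself. That reduction already established that $\kappa = \kappa_s$ is equivalent to the vanishing of the linear functional $f \mapsto \langle \Sigma_f\, j(\e_0,\e_1)\rangle_1$ on every cylinder function $f$, since otherwise replacing $f$ by $\lambda f$ for small $\lambda$ of suitable sign makes the bracket defining the infimum strictly negative.

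For the forward implication, I would feed the cylinder choice $f = j(\e_0, \e_1)$ into the Lemma: its identity $\langle \Sigma_{j(\e_0,\e_1)}\, j(\e_0,\e_1)\rangle_1 = \langle (j(\e_0,\e_1) + \tilde j(\e_1) - \tilde j(\e_0))^2\rangle_1$ forces $j(\e_0, \e_1) = \tilde j(\e_0) - \tilde j(\e_1)$ almost everywhere, which is a gradient form with $g = \tilde j$.

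For the reverse direction, I would take a putative gradient representation $j(\e_a, \e_b) = g(\e_a) - g(\e_b)$ and verify $\langle \Sigma_f\, j(\e_0,\e_1)\rangle_1 = 0$ for every cylinder $f$. Without loss of generality $\langle f\rangle_1 = 0$, since replacing $f$ by $f - \langle f\rangle_1$ only shifts $\Sigma_f$ by a formal constant. If $f$ is supported on sites $[-n,n]$, then for $|i|$ outside a small neighbourhood of $\{0,1\}$ the function $\tau_i f$ is independent of $(\e_0,\e_1)$, and by the product structure of $\nu^1_{eq}$ combined with $\langle f\rangle_1 = 0$ the covariance $\langle \tau_i f\,(g(\e_0) - g(\e_1))\rangle_1$ vanishes; hence the sum $\sum_i \langle \tau_i f\,(g(\e_0) - g(\e_1))\rangle_1$ has only finitely many nonzero terms. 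Translation invariance then identifies $\sum_i \langle \tau_i f\, g(\e_0)\rangle_1$ with $\sum_k \langle f\, g(\e_k)\rangle_1$ after the index change $k = -i$, and likewise identifies $\sum_i \langle \tau_i f\, g(\e_1)\rangle_1$ with the same finite sum after $k = 1-i$; their difference is $0$.

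The main substance is already concentrated in the Lemma, whose algebraic identity uses antisymmetry of $j$ and the product form of the equilibrium measure to reduce $\langle j(\e_{-1},\e_0)\, j(\e_0,\e_1)\rangle_1$ to $-\langle \tilde j(\e_0)^2\rangle_1$; the reverse direction above is then a routine telescoping. Once Proposition~\ref{prop:A1} is in hand, Theorem~\ref{thm:1} follows by invoking the degree-$3/2$ homogeneity of $j$ implied by scaling property (i): the identity $g(c\e_a) - g(c\e_b) = c^{3/2}(g(\e_a)-g(\e_b))$, specialized at $\e_b$ tending to $0$, forces $g(\e) - g(0) = C\e^{3/2}$ and hence $j(\e_a,\e_b) = C(\e_a^{3/2} - \e_b^{3/2})$.
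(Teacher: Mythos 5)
Your proposal is correct and follows essentially the same route as the paper: the variational reduction to the vanishing of $f \mapsto \langle \Sigma_f\, j(\e_0,\e_1)\rangle_1$, the choice $f = j(\e_0,\e_1)$ in the key lemma for the forward direction, and the telescoping/translation-invariance argument for the converse (which the paper dismisses as ``obvious'' but you rightly spell out, including the need to centre $f$ so that each sum is individually finite).
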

\begin{proof}
By the above lemma, if $\kappa=\kappa_s$, then $ \langle  \Sigma_{j(\e_0,\e_1)},j(\e_0,\e_1) \rangle_1  = 0$ holds. Therefore  $ j(\e_0,\e_1)= \tilde{j}(\e_0) -\tilde{j}(\e_1)$. Namely, the current is a gradient of $\tilde{j}$. On the other hand, if the current is a gradient of some function $h$ then obviously, $\langle  \Sigma_{f},j(\e_0,\e_1)\rangle_1=0$ for any cylinder function $f$ and so $\kappa = \kappa_s$.
\end{proof}

\begin{remark}
By the above lemma, if $j(\e_0,\e_1)=h(\e_0)-h(\e_1)$ for some function $h$, then $j(\e_0,\e_1)=\tilde{j}(\e_0)-\tilde{j}(\e_1)$.
\end{remark}

Now, we give an explicit form of the current $j$ under the condition that the current is written as a gradient form.
\begin{lemma}
If $j(\e_0,\e_1)=h(\e_0)-h(\e_1)$ for some function $h$, then $j(\e_0,\e_1)=C(\e_0^{3/2}-\e_1^{3/2})$ for some constant $C$.
\end{lemma}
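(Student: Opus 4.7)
The plan is to combine two ingredients: the scaling symmetry of $j$ that follows from condition (i) on $W$, and the assumed gradient form $j(\e_0,\e_1)=h(\e_0)-h(\e_1)$. Condition (i) forces $j$ to be positively homogeneous of degree $3/2$, and this rigidity in turn pins down $h$ up to an additive constant (which cancels in the difference) and a multiplicative constant.

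First I would establish the scaling. Starting from
\begin{equation*}
j(c\e_a,c\e_b)=\int_{-c\e_b}^{c\e_a}\eta\,W(c\e_a,c\e_b\mid c\e_a-\eta,c\e_b+\eta)\,d\eta
\end{equation*}
and changing variables $\eta=c\xi$, then applying condition (i), the two factors of $c$ from $\eta$ and $d\xi$ combine with the factor $c^{-1/2}$ from the kernel to yield $j(c\e_a,c\e_b)=c^{3/2}\,j(\e_a,\e_b)$ for every $c>0$.

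Next, inserting the ansatz $j(\e_0,\e_1)=h(\e_0)-h(\e_1)$ into this identity gives the functional equation
\begin{equation*}
h(c\e_0)-h(c\e_1)=c^{3/2}\bigl(h(\e_0)-h(\e_1)\bigr)\qquad(c,\e_0,\e_1>0).
\end{equation*}
Setting $\e_1=1$ yields $h(c\e_0)=h(c)+c^{3/2}\bigl(h(\e_0)-h(1)\bigr)$. The right-hand side must be symmetric under the exchange $c\leftrightarrow\e_0$, since $c\e_0=\e_0 c$ on the left. Writing $g(x):=h(x)-h(1)$ and equating the two expressions produced by the exchange gives
\begin{equation*}
g(c)\bigl(1-\e_0^{3/2}\bigr)=g(\e_0)\bigl(1-c^{3/2}\bigr),
\end{equation*}
so the ratio $g(x)/(x^{3/2}-1)$ is a constant $C$ independent of $x\ne 1$. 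Hence $h(x)=h(1)+C(x^{3/2}-1)$, and subtracting yields $j(\e_0,\e_1)=C(\e_0^{3/2}-\e_1^{3/2})$, as claimed.

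The only delicate point is the passage from the functional equation to the explicit form of $h$. A one-variable substitution (say $\e_1=0$) would work instantly if $h$ were known to extend continuously to zero, but in the stated setting $h$ is only defined on $(0,\infty)$; the symmetry trick above bypasses this issue and does not require any regularity of $h$ beyond what is implicit in the fact that $j$ is an integral against $W$. That is the only step that needs more than a line of computation; everything else is a direct application of the scaling condition (i) inherited from mesoscopic mechanical kernels.
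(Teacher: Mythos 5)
Your proposal is correct and follows essentially the same route as the paper: derive the homogeneity $j(c\e_0,c\e_1)=c^{3/2}j(\e_0,\e_1)$ from condition (i), plug in the gradient form to get the functional equation $h(c\e_0)-h(c\e_1)=c^{3/2}\bigl(h(\e_0)-h(\e_1)\bigr)$, and solve it. The only divergence is the last step, where the paper normalizes $h(0)=0$ ``without loss of generality'' to conclude $h(c\e_0)=c^{3/2}h(\e_0)$ --- which tacitly assumes $h$ extends to $0$ --- whereas your symmetry trick with $c\e_0=\e_0 c$ solves the functional equation on $(0,\infty)$ without any such extension or regularity assumption, so your version is in fact slightly more careful at the one point where the paper's argument is loose.
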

\begin{proof}
By the condition (i), $j(\e_0,\e_1)=(\e_0+\e_1)^{3/2}j(\frac{\e_0}{\e_0+\e_1},\frac{\e_1}{\e_0+\e_1})$. Then, since $j(\e_0,\e_1)=h(\e_0)-h(\e_1)$, for any $c>0$,
\begin{align}
c^{3/2} \big(h(\e_0)-h(\e_1)\big)=c^{3/2}j(\e_0,\e_1)=j(c\e_0,c\e_1)=h(c\e_0)-h(c\e_1).
\end{align}
As $h$ is defined up to constant, we can assume $h(0)=0$ without loss of generality, so $c^{3/2}h(\e_0)=h(c\e_0)$.
\end{proof}

\begin{remark}
Feng et al. \cite{FIS} studied the stochastic model given by (\ref{eq:maindynamics}) with the rate kernel $W(\e_a,\e_b|\e_a-\eta,\e_b+\eta)=\sqrt{\frac{1}{|\eta|}}$. It is easy to see that the kernel satisfies the conditions (i),(ii) and (iii) with $d=2$ and its current is written as a gradient form. 
\end{remark}

\begin{remark}
Proposition \ref{prop:A1} claims that the variational formula for the thermal conductivity attains its infimum with $f \equiv 0$ if and only if the current can be written as a gradient form. This property should hold for very general models (cf. Theorem 1.1 in \cite{Li}). In fact, in our proof, we use only the following three properties of the model: the current depends only on $\e_0$ and $\e_1$, the current is antisymmetric and the equilibrium measure is product.
\end{remark}

\section{static values}

We give a proof of Proposition \ref{prop:1} and \ref{prop:2}. For this purpose, we give some explicit expressions for static values, $\kappa_f,\kappa_1$ and $\kappa_2$.
First, we introduce a function $\bar{W}(\a, \b) : (0,1)^2 \to \R_+$ defined as $\bar{W}(\a,\b):=W(\a,1-\a|\b,1-\b)$. Then, the assumptions (i), (ii) and (iii) are rewritten in the following way :
\begin{enumerate}
\item[(i)] $W(\e_a,\e_b|\e_a-\eta,\e_b+\eta) = \sqrt{\frac{1}{\e_a+\e_b}} \bar{W}(\frac{\e_a}{\e_a+\e_b},\frac{\e_a -\eta}{\e_a+\e_b})$, 
\item[(ii)] $\bar{W}(\a,\b)=\bar{W}(1-\a,1-\b)$,
\item[(iii)] $(\a(1-\a))^{\frac{d}{2}-1}\bar{W}(\a,\b) = (\b(1-\b))^{\frac{d}{2}-1} \bar{W}(\b,\a)$.
\end{enumerate}

Using the notation $\bar{W}$, we have 
\begin{align}
\nu(\e_a,\e_b) & = \sqrt{\e_a+\e_b} \int_{0}^{1} \bar{W}(\frac{\e_a}{\e_a+\e_b}, \beta )d\beta, \\
j(\e_a,\e_b) & =(\sqrt{\e_a+\e_b})^3 \int_{0}^{1} \big(\frac{\e_a}{\e_a+\e_b} -\beta \big) \bar{W} \big(\frac{\e_a}{\e_a+\e_b}, \beta \big)d\beta, \\
h(\e_a,\e_b) & =(\sqrt{\e_a+\e_b})^5 \int_{0}^{1} \big(\frac{\e_a}{\e_a+\e_b} -\beta \big)^2 \bar{W}\big(\frac{\e_a}{\e_a+\e_b}, \beta \big)d\beta.
\end{align}
Then, since $\e_a+\e_b$ and $\frac{\e_a}{\e_a+\e_b}$ are independent under the equilibrium measure, which is the product of identical gamma distributions, we have
\begin{align}
\langle \nu(\e_a,\e_b) \rangle_1 & = \langle\sqrt{\e_a+\e_b} \rangle_1 \langle \int_{0}^{1} \bar{W}(\frac{\e_a}{\e_a+\e_b}, \beta )d\beta  \rangle_1,  \nonumber \\
\langle (\e_a-\e_b) j(\e_a,\e_b) & \rangle_1  \nonumber \\
=\langle(\sqrt{\e_a+\e_b})^5  \rangle_1 &  \langle \int_{0}^{1} \big(\frac{2\e_a}{\e_a+\e_b} -1 \big) \big(\frac{\e_a}{\e_a+\e_b} -\beta \big) \bar{W} \big(\frac{\e_a}{\e_a+\e_b}, \beta \big)d\beta  \rangle_1, \nonumber \\
\langle h(\e_a,\e_b)  \rangle_1 & =\langle(\sqrt{\e_a+\e_b})^5 \rangle_1 \langle \int_{0}^{1} \big(\frac{\e_a}{\e_a+\e_b} -\beta \big)^2 \bar{W}\big(\frac{\e_a}{\e_a+\e_b}, \beta \big)d\beta  \rangle_1.
\end{align}
By the definition, we have $\langle (\e_a+\e_b)^m \rangle_1=\frac{\Gamma(d+m)}{\Gamma(d)}$, and for any measurable $f:(0,1) \to \R$, 
\begin{align}
\langle \int_{0}^{1} &  f(\frac{\e_a}{\e_a+\e_b})  \bar{W}(\frac{\e_a}{\e_a+\e_b}, \beta )d\beta  \rangle_1 
\nonumber \\&  =\frac{\Gamma(d)}{\Gamma(\frac{d}{2})^2}\int_{0}^{1}\int_{0}^{1}   f(\a)  \bar{W}(\a, \beta )d\beta  (\a(1-\a))^{\frac{d}{2}-1} d\a \nonumber \\
&  =\frac{\Gamma(d)}{\Gamma(\frac{d}{2})^2}\int_{0}^{1}\int_{0}^{1}   f(\a)  \tilde{W}(\a, \beta )d\beta d\a
 \end{align}
where $\tilde{W}(\a, \beta )=\bar{W}(\a, \beta )(\a(1-\a))^{\frac{d}{2}-1}$.
Therefore, we have %$ \langle \nu(\e_a,\e_b) \rangle_T =\sqrt{T}C_3 $, $\frac{1}{2} \langle (\e_a-\e_b) j(\e_a,\e_b) \rangle_T =T^{\frac{5}{2}} C_4$ and $\frac{1}{2} \langle h(\e_a,\e_b) \rangle_T = T^{\frac{5}{2}}C_5$ 
where
\begin{align}\label{eq:constants}
\kappa_f & =\frac{\Gamma(d+\frac{1}{2})}{\Gamma(\frac{d}{2})^2}\int_0^1 \int_0^1 \tilde{W}(\a, \beta )  d\beta d\a, \nonumber \\
\kappa_1 & =\frac{\Gamma(d+\frac{5}{2})}{\Gamma(\frac{d}{2})^2}\int_0^1 \int_0^1 (\a-\frac{1}{2})(\a-\beta) \tilde{W}(\a, \beta ) d\beta d\a, \nonumber \\
\kappa_2 & =\frac{\Gamma(d+\frac{5}{2})}{ \Gamma(\frac{d}{2})^2}\int_0^1 \int_0^1 \frac{(\a-\beta)^2}{2} \tilde{W}(\a, \beta ) d\beta d\a.  \nonumber \\
\end{align}

%By the assumption $\tilde{W}(\a,\b)=\tilde{W}(1-\a,1-\b)$, we have $\nu(\e_a,\e_b)=\nu(\e_b,\e_a)$, $j(\e_a,\e_b)=-j(\e_b,\e_a)$ and $h(\e_a,\e_b)=h(\e_b,\e_a)$.

%We denote the equilibrium average with temperature $T$ of a function $f(\e_a,\e_b)$ by $<f(\e_a,\e_b)>_T$ and define as 
%\begin{equation}
%<f(\e_a,\e_b)>_T := \frac{1}{\Gamma(\frac{d}{2})^2T^d} \int_0^{\infty} \int_0^{\infty} d\e_a d\e_b f(\e_a,\e_b) (\e_a\e_b)^{\frac{d}{2}-1} \exp \big(-(\e_a+\e_b)/T \big).
%\end{equation}
%\section{Proofs}

\begin{lemma}
$\kappa_1=\kappa_2$ holds for any mesoscopic mechanical model.%kernel $W$ satisfying the assumption (i), (ii) and (iii).
\end{lemma}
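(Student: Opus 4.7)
The plan is a one-line symmetrization. From the explicit formulas \eqref{eq:constants}, the equality $\kappa_1 = \kappa_2$ reduces to the identity
\begin{equation*}
\int_0^1\!\!\int_0^1 \bigl(\a - \tfrac{1}{2}\bigr)(\a-\b)\,\tilde{W}(\a,\b)\,d\b\,d\a
\;=\;
\int_0^1\!\!\int_0^1 \tfrac{(\a-\b)^2}{2}\,\tilde{W}(\a,\b)\,d\b\,d\a,
\end{equation*}
so I only need to manipulate the left-hand side.

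First I would record that $\tilde{W}$ is symmetric in its arguments: combining assumptions (ii) and (iii) in the $\bar W$/$\tilde W$ form used in Appendix B gives $\tilde{W}(\a,\b) = \tilde{W}(\b,\a)$. Then, denoting the left-hand integral by $I_1$, I apply the change of variables $(\a,\b) \mapsto (\b,\a)$ and use this symmetry to obtain the alternative representation
\begin{equation*}
I_1 \;=\; \int_0^1\!\!\int_0^1 \bigl(\b - \tfrac{1}{2}\bigr)(\b-\a)\,\tilde{W}(\a,\b)\,d\b\,d\a.
\end{equation*}
Averaging the two expressions yields
\begin{equation*}
2I_1 = \int_0^1\!\!\int_0^1 (\a-\b)\bigl[(\a-\tfrac{1}{2}) - (\b-\tfrac{1}{2})\bigr]\tilde{W}(\a,\b)\,d\b\,d\a
      = \int_0^1\!\!\int_0^1 (\a-\b)^2\,\tilde{W}(\a,\b)\,d\b\,d\a,
\end{equation*}
which is exactly twice the integral appearing in $\kappa_2$. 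Multiplying by the common prefactor $\Gamma(d+\tfrac{5}{2})/\Gamma(\tfrac{d}{2})^2$ finishes the proof.

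There is no real obstacle here: the statement is a clean consequence of the $\a\leftrightarrow\b$ symmetry of $\tilde W$, which encapsulates the combination of the spatial symmetry (ii) and detailed balance (iii). The only point that requires a moment of care is to verify that $\tilde W$ is genuinely symmetric in $(\a,\b)$ (not merely under the joint flip $(\a,\b)\mapsto(1-\a,1-\b)$); this is immediate from the restated form of (ii) and (iii) in Appendix B, since $\tilde W(\a,\b) = (\a(1-\a))^{d/2-1}\bar W(\a,\b) = (\b(1-\b))^{d/2-1}\bar W(\b,\a) = \tilde W(\b,\a)$.
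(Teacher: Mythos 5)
Your proof is correct and follows essentially the same route as the paper: both arguments rest on the symmetry $\tilde{W}(\a,\b)=\tilde{W}(\b,\a)$ (which, as you note, already follows from (iii) alone) together with the change of variables $(\a,\b)\mapsto(\b,\a)$. The only cosmetic difference is that you symmetrize the $\kappa_1$ integral directly, whereas the paper equivalently shows that $\int_0^1\int_0^1(\a-\b)\,\tilde{W}(\a,\b)\,d\b\,d\a$ and $\int_0^1\int_0^1(\a^2-\b^2)\,\tilde{W}(\a,\b)\,d\b\,d\a$ both vanish.
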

\begin{proof}
It is enough to show that 
\begin{align*}
\int_0^1 \int_0^1 & (\a-\frac{1}{2})(\a-\beta)  \tilde{W}(\a, \beta )  d\beta d\a   =\int_0^1 \int_0^1 \frac{(\a-\beta)^2}{2} \tilde{W}(\a, \beta ) d\beta d\a.
\end{align*}
By the assumption (iii), $\tilde{W}(\a, \beta )=\bar{W}(\a, \beta )(\a(1-\a))^{\frac{d}{2}-1} = \bar{W}(\b, \a )(\b(1-\b))^{\frac{d}{2}-1}=\tilde{W}(\b, \a)$. Therefore,
\begin{align*}
\int_0^1 \int_0^1 (\a-\beta)  \tilde{W}(\a, \beta )  d\beta d\a = \int_0^1 \int_0^1 (\a-\beta)  \tilde{W}(\beta, \a )  d\beta d\a
\end{align*}
 and then by the change of the variables, we get 
 \begin{equation}\label{eq:identity}
 \int_0^1 \int_0^1 (\a-\beta)  \tilde{W}(\a, \beta )  d\beta d\a=0.
 \end{equation}
  In the same manner, we can show that
 \begin{align*}
\int_0^1 \int_0^1 \a^2 \tilde{W}(\a, \beta )  d\beta d\a = \int_0^1 \int_0^1 \b^2 \tilde{W}(\a, \beta )  d\beta d\a
\end{align*}
and so we complete the proof.
\end{proof}

\begin{remark}
Using two different expressions (\ref{eq:static}) and (\ref{eq:static2}) of $\kappa_s(T)$, we can also show $\kappa_1=\kappa_2$.
\end{remark}

\begin{lemma}
$\kappa_f=\kappa_1$ if any only if 
\begin{align*}
\int_0^1 \int_0^1 &  \tilde{W}(\a, \beta ) d\beta d\a   =(d+\frac{3}{2})(d+\frac{1}{2})\int_0^1 \int_0^1 \a (\a-\beta)  \tilde{W}(\a, \beta ) d\beta d\a.
\end{align*}
\end{lemma}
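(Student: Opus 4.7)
The plan is to derive the claim directly from the explicit formulas already displayed in equation (\ref{eq:constants}). Setting $\kappa_f = \kappa_1$ and clearing the common prefactor $\Gamma(\tfrac{d}{2})^{-2}$, the equality becomes
\begin{equation*}
\Gamma(d+\tfrac{1}{2})\int_0^1\!\!\int_0^1 \tilde{W}(\a,\b)\,d\b\,d\a
= \Gamma(d+\tfrac{5}{2})\int_0^1\!\!\int_0^1 (\a-\tfrac{1}{2})(\a-\b)\,\tilde{W}(\a,\b)\,d\b\,d\a.
\end{equation*}
Using the identity $\Gamma(d+\tfrac{5}{2})/\Gamma(d+\tfrac{1}{2}) = (d+\tfrac{3}{2})(d+\tfrac{1}{2})$, this is equivalent to
\begin{equation*}
\int_0^1\!\!\int_0^1 \tilde{W}(\a,\b)\,d\b\,d\a
= (d+\tfrac{3}{2})(d+\tfrac{1}{2})\int_0^1\!\!\int_0^1 (\a-\tfrac{1}{2})(\a-\b)\,\tilde{W}(\a,\b)\,d\b\,d\a.
\end{equation*}

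The key remaining observation is that the right-hand integrand may be simplified by replacing $(\a-\tfrac{1}{2})$ with $\a$. Indeed, the gap is
\begin{equation*}
\tfrac{1}{2}\int_0^1\!\!\int_0^1 (\a-\b)\,\tilde{W}(\a,\b)\,d\b\,d\a,
\end{equation*}
and this quantity vanishes by the identity (\ref{eq:identity}) established in the proof of the preceding lemma (which itself follows from the symmetry $\tilde{W}(\a,\b)=\tilde{W}(\b,\a)$ granted by condition (iii) combined with condition (ii)). Thus the displayed equality is unchanged if $(\a-\tfrac{1}{2})(\a-\b)$ is replaced by $\a(\a-\b)$, and this gives precisely the condition stated in the lemma.

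Both implications follow at once from this chain of equivalences, so there is no real obstacle once equation (\ref{eq:constants}) and identity (\ref{eq:identity}) are in hand. The only point that requires a moment of care is the bookkeeping of the Gamma-function ratio and the sign of the constant in front; everything else is algebraic rewriting of the integrals over $(0,1)^2$.
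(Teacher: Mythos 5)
Your proposal is correct and follows the same route as the paper, which simply cites the explicit expressions (\ref{eq:constants}) together with the identity (\ref{eq:identity}); you have merely written out the Gamma-function ratio $\Gamma(d+\tfrac{5}{2})/\Gamma(d+\tfrac{1}{2})=(d+\tfrac{3}{2})(d+\tfrac{1}{2})$ and the replacement of $(\a-\tfrac{1}{2})$ by $\a$ that the paper leaves implicit. No gaps.
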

\begin{proof}
It is clear by the explicit expressions (\ref{eq:constants}) of $\kappa_f$, $\kappa_1$ and (\ref{eq:identity}).
\end{proof}

\section*{Acknowledgement}

The author expresses her sincere thanks to Professor Herbert Spohn and Professor Stefano Olla for their insightful discussions and encouragement.

\end{document}